\newcommand{\linia}{\rule{\linewidth}{0.5pt}}
\newcommand{\set}[1]{ \{#1\} }
\newtheorem{theorem}{Theorem}
\renewcommand{\maketitle}{
\begin{center}
\vspace{2ex}
{\huge \textsc{\@title}}
\vspace{1ex}
\\
\linia\\
\@author\\
\@date
\vspace{4ex}
\end{center}
}
\begin{document}

\title{Path Optimization Sheaves}

\author[1]{Michael Moy}
\author[2]{Robert Cardona}
\author[2]{Robert Green}
\author[3]{Jacob Cleveland}
\author[4]{Alan Hylton}
\author[4]{Robert Short}
\affil[1]{Colorado State University}
\affil[2]{University at Albany - State University of New York}
\affil[3]{University of Nebraska at Omaha}
\affil[4]{NASA Glenn Research Center}


\date{\today}

\maketitle

\begin{abstract}

    Motivated by efforts to incorporate sheaves into networking, we seek to reinterpret pathfinding algorithms in terms of cellular sheaves, using Dijkstra's algorithm as an example.
    We construct sheaves on a graph with distinguished source and sink vertices, in which paths are represented by sections.
    The first sheaf is a very general construction that can be applied to other algorithms, while the second is created specifically to capture the decision making of Dijkstra's algorithm.
    In both cases, Dijkstra's algorithm can be described as a systematic process of extending local sections to global sections.
    We discuss the relationship between the two sheaves and summarize how other pathfinding algorithms can be interpreted in a similar way.
    While the sheaves presented here address paths and pathfinding algorithms, we suggest that future work could explore connections to other concepts from graph theory and other networking algorithms.
    
    This work was supported by the NASA Internship Project and SCaN Internship Project during the summer of 2020.    
    
    
    
\end{abstract}

\section{Introduction}

    An important goal of networking as a science is to study the effect of local decision making on global structure.  While graph theory has proved to be a useful mathematical model for understanding networking structures, the fact that graphs are traditionally static structures has led to concerns about their applicability to mobile wireless routing and delay-tolerant networking (DTN) problems.  In the search for a better model for understanding networks, sheaves have arisen as a more rigorous way to describe local actions impacting global structures.  Sheaves have proven useful in modeling wireless routing and modeling information flow through a network (\cite{ref_protocol},\cite{ref_flowcut}, \cite{ref_ncsheaf}). In addition, the High-rate Delay-Tolerant Networking (HDTN) project at NASA Glenn Research Center has been examining sheaf models of delay-tolerant networking as a framework for improving DTN functionalities (\cite{aeroconf2020}).
    
    Our goal in this paper is to translate Dijkstra's algorithm (\cite{ref_dijkstra}) into the language of sheaves as further evidence that sheaves are an effective model for networking.  As such, in this paper we define two sheaves: the "path sheaf," which generally models paths from a source to a sink, and the "distance path sheaf," which expands the path sheaf to account for weighted edges. These two sheaves are related by a simple sheaf morphism. For both sheaves, we interpret Dijkstra's algorithm as a process of extending local sections to global sections.  
    We also make a general claim that other pathfinding algorithms meeting certain requirements can be understood in terms of the path sheaf, and suggest that our construction of the distance path sheaf could be mimicked for other algorithms. 
    This paper concludes with a section containing suggestions for further work in this direction.  This includes ideas for describing other algorithms related to networking in terms of sheaves and for exploring further connections between sheaves and graph theory.
    
    
    This work was sponsored by the NASA Internship Program and SCaN Internship Program at NASA Glenn Research Center during the summer of 2020.  
    
    \subsection{Graph Theory Basics}

    For the purposes of this paper we will be considering finite graphs $G=( V,E)$ where $V$ is the set of vertices (also called nodes) and $E$ is the set of edges.  Each edge $e \in E$ is a set of two distinct elements of $V$, such as $e = \{v_1,v_2\}$. We define the degree $deg(v)$ of a vertex $v$ to be the number of edges that contain $v$, and will write $deg_G(v)$ if there is a need to specify the graph $G$. We will define a path as a sequence of edges $(e_0, e_1, \dots,e_{n-1})$ connecting a sequence of distinct vertices $(v_0, v_1, \dots, v_n)$ called the vertex sequence, such that $e_i = \{v_i, v_{i+1}\}$ for each $i$. Similarly, we will define a cycle as a sequence of edges $(e_0, e_1, \dots,e_{n-1})$ connecting a vertex sequence $(v_0, v_1, \dots, v_n)$, such that $e_i = \{v_i, v_{i+1}\}$ for each $i$ and $v_0 = v_n$, with all other vertices distinct. In some instances in this paper we consider a weighted graph, which we define as a graph $G$ with a map $w:E \rightarrow \mathbb{R}^+$ where $w(e)$ is called the weight of $e$.

   \subsection{Sheaf Theory Basics}
   
   Because we are working over graphs, we will use the theory of cellular sheaves to define the sheaves in this paper.  While we summarize the key definitions below, a more formal treatment can be found in \cite{ref_curry}.  For a more classical treatment of sheaves, we recommend \cite{ref_bredonshf} or \cite{ref_swan}.
   
   A sheaf over a graph $G=(V,E)$ is a functor $\mathcal{F}$ assigning to each vertex and edge of $G$ a set $\mathcal{F}(v)$ (or $\mathcal{F}(e)$ respectively).  In addition, for each vertex $v$ and edge $e$ such that $v \in e$, then there is a restriction map $\mathcal{F}(v\rightsquigarrow e): \mathcal{F}(v) \to \mathcal{F}(e)$.  
    A (global) section $s$ is a choice of elements in each set over each vertex and edge of $G$ such that for any vertex $v$ and edge $e$ such that $v \in e$, $\mathcal{F}(v\rightsquigarrow e)(s(v))=s(e)$.  If $H \subseteq V \cup E$, then a local section $s$ (over $H$) is a section defined only over elements in $H$.
   
   A good way to think about sheaves is as a way of organizing and relating data moving across a network, such as in \cite{aeroconf2020}.  For a sheaf $\mathcal{F}$ over a graph $G$, the sets over the vertices can represent what information can be stored in each node while the edges can represent what information can be transmitted over each edge.  Restriction maps then prescribe the relationship between data transmitted over edges and the data stored in the vertices.  A section prescribes information to each vertex and edge in the graph that is consistent with the relationships we expect to see in the data.  The real power of sheaves comes from defining restriction maps well so that the sections provide actual solutions to different problems.

    In this paper, the information being organized and related using sheaves is not the actual information moving through a network, but instead represents capacities or amounts of data that can be moved.  That being said, it is still useful to think of sheaves as layering information above a network, but the information in our case will be information about limits of resources rather than information about the resources themselves.    
    
    \subsection{Acknowledgements}
    This work was supported by the NASA Internship Project and SCaN Internship Project during the summer of 2020.  We would like to thank Gabriel Bainbridge, Joseph Ozbolt, Michael Robinson, Justin Curry, and John Nowakowski for many helpful discussions and for providing advice and guidance while writing this paper.
   
\section{Path Sheaf}
    For our first sheaf, we assume that we have a finite graph $G=(V,E)$ with two distinguished vertices: a source vertex $v_S$ and sink vertex $v_T$.  In addition, we require that $\deg(v) \geq 2\hspace{0.1cm}$ for all $v\in V$, except the sink $v_T$ and the source $v_S$, which we will require only to have degree $\geq 1$. Define $E(v)$ to be the set of all edges in $G$ connected to $v\in V$ and let $H(v)= \{\{e_i,e_j\}\subseteq E(v)\mid e_i \neq e_j\}$ be the set of two-element subsets of $E(v)$.  In addition, let $H_o(v) = \{(e_i,e_j) \in E(v)^2 \mid e_i \neq e_j\}$ be the possible orderings of the sets in $H(v)$, which will be used later.  We will generally use the symbol $\top$ to represent an active edge, one that is being routed through, while $\bot$ represents an inactive object, whether a vertex or an edge.
    
    We define the path sheaf $\mathcal{P}$ on $G$ to map vertices by the rule
    \[
  \mathcal{P}(v) =
  \begin{cases}
                                    E(v) & \text{if $v=v_T$ or $v=v_S$} \\
                                   H(v)\cup\{\bot\} & \text{otherwise} 
  \end{cases},
\]

\noindent and edges by the rule
\[
\mathcal{P}(e) = \{\bot, \top\}.
\]
\noindent If $v$ is a sink or a source, define the restriction map
\[
\mathcal{P}(v\rightsquigarrow e)(e_i) = 
\begin{cases}
            \top & \text{if $e = e_i$} \\
                                   \bot & \text{otherwise} 
\end{cases}.
\]

\noindent If $v$ is a non-sink and a non-source node with assignment $\{e_i,e_j\}$, define the restriction map
\[
\mathcal{P}(v\rightsquigarrow e)(\{e_i,e_j\}) = 
\begin{cases}
    \top & \text{if $e = e_i$ or $e = e_j$}\\
    \bot & \text{otherwise}
\end{cases}.
\]

\noindent Lastly, if $v$ is as above except with assignment $\bot$, define the restriction map
\[
\mathcal{P}(v\rightsquigarrow e)(\bot) = \bot .
\]

\subsection{Sections of $\mathcal{P}$}

Because of how we constructed our sheaf, a section of $\mathcal{P}$ describes a collection of edges and nodes that are active. A section can be thought of as representing a path in the graph, where edges assigned to $\top$ are included in the path.  By the definition of the restriction maps, the value of a section at an active vertex is the set of two active edges (or the single active edge for the source or sink) connecting to the vertex.

Considering the following example given in Figure \ref{ex:arc_path}.  Tracking the active vertices and edges yields a path from $v_S$ to $v_T$.  In particular, $s_1$ carves out the path $(e_2,e_4,e_3)$.  We can see this by following the section values and restriction maps of the sheaf as a means of moving along the graph.  Here, $s_1(v_S) = e_2$, so $\mathcal{P}(v_S \rightsquigarrow e_2)(e_2) = \top = s_1(e_2)$ and $\mathcal{P}(v_S \rightsquigarrow e_1)(e_2) = \bot= s_1(e_1)$.  Since $s_1(e_2)=\top$, we then look to $s_1(v_3)$.  Since $s_1(v_3) = \set{e_2,e_4}$, we know both where we came from -- since $\mathcal{P}(v_3 \rightsquigarrow e_2)(\set{e_2,e_4}) = \top = s_1(e_2)$ -- and where we are going -- since $\mathcal{P}(v_3 \rightsquigarrow e_4)(\set{e_2,e_4})=\top$ and $\mathcal{P}(v_3 \rightsquigarrow e_5)(\set{e_2,e_4})=\bot$ which both correspond to the section values.  In this way, the section values at the vertices in some sense determine the section values at the edges via the restriction maps.  The reasons this is a section at all are that the restriction maps are respected and the values are consistent across the full sheaf.

\begin{figure}
\begin{tikzpicture}
          \draw [black, fill = black] (-4,0) circle [radius=0.1];
          \draw [black, fill = black] (4,0) circle [radius=0.1]; 
          \draw [black, fill = black] (0,0) circle [radius=0.1]; 
          \draw [black, fill = black] (0,3) circle [radius=0.1]; 
          \draw [black, fill = black] (4,3) circle [radius=0.1]; 
          \draw [black, fill = black] (-4,3) circle [radius=0.1]; 
          \draw [black] (-4,0) -- (0,0); 
          \draw [black] (4,0) -- (0,0); 
          \draw [black] (-4,3) -- (0,3); 
          \draw [black] (4,3) -- (0,3); 
          \draw [black] (-4,0) -- (-4,3); 
          \draw [black] (0,0) -- (0,3); 
          \draw [black] (4,0) -- (4,3); 
          
          \node at (6,1.5) {Graph $G$};
          
          \node at (-4.25,1.5) {$e_1$};
          \node at (-2,3.25) {$e_2$};
          \node at (-2,-0.25) {$e_3$};
          \node at (0.25,1.5) {$e_4$};
          \node at (2,-0.25) {$e_6$};
          \node at (2,3.25) {$e_5$};
          \node at (4.25,1.5) {$e_7$};
          \node at (-4.3,3.3) {$v_S$};
          \node at (-4.3,-0.3) {$v_T$};
          \node at (0,3.3) {$v_3$};
          \node at (0,-0.3) {$v_4$};
          \node at (4.3,3.3) {$v_5$};
          \node at (4.3,-0.3) {$v_6$};
\end{tikzpicture}

\begin{tikzpicture}
          \draw [black,->] (0,4.5) -- (0,3.5);
          \node at (0.3,4) {$s_1$}; 
          \draw [red] (-4,0) -- (0,0); 
          \draw [black,dashed] (4,0) -- (0,0); 
          \draw [red] (-4,3) -- (0,3); 
          \draw [black,dashed] (4,3) -- (0,3); 
          \draw [black,dashed] (-4,0) -- (-4,3); 
          \draw [red] (0,0) -- (0,3); 
          \draw [black,dashed] (4,0) -- (4,3); 
          \draw [red, fill = red] (-4,0) circle [radius=0.1];
          \draw [black] (4,0) circle [radius=0.1]; 
          \draw [red, fill = red] (0,0) circle [radius=0.1]; 
          \draw [red, fill = red] (0,3) circle [radius=0.1]; 
          \draw [black] (4,3) circle [radius=0.1]; 
          \draw [red, fill = red] (-4,3) circle [radius=0.1]; 
          
          \node at (8,1.5) {\color{red} Section $s_1$ values over $G$};
          \node at (-4.25,1.5) {$\bot$};
          \node at (-2,3.25) {$\top$};
          \node at (-2,-0.25) {$\top$};
          \node at (0.25,1.5) {$\top$};
          \node at (2,-0.25) {$\bot$};
          \node at (2,3.25) {$\bot$};
          \node at (4.25,1.5) {$\bot$};
          \node at (-4.3,3.3) {$e_2$};
          \node at (-4.3,-0.3) {$e_3$};
          \node at (0,3.3) {$\{e_2, e_4\}$};
          \node at (0,-0.3) {$\{e_3, e_4 \}$};
          \node at (4.3,3.3) {$\bot$};
          \node at (4.3,-0.3) {$\bot$};
\end{tikzpicture}
\caption{A graph $G$ in black with the section $s_1$ of $\mathcal{P}$ over $G$ in red.\label{ex:arc_path}}
\end{figure}
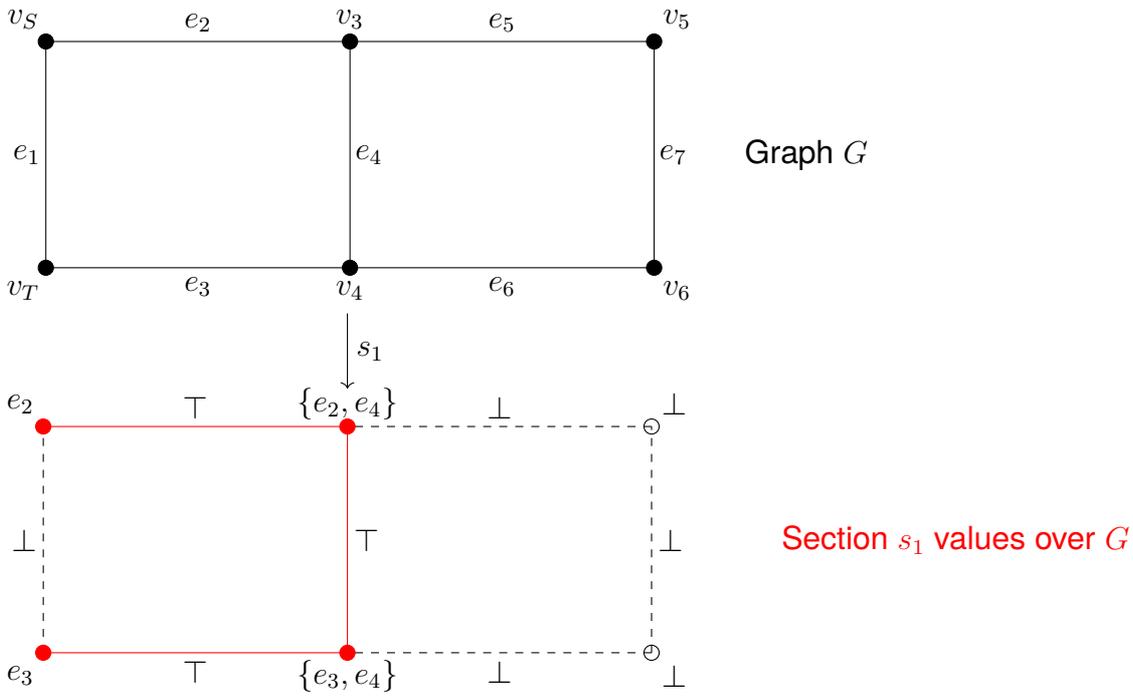

The fact that the section maps out a path from $v_S$ to $v_T$ is not unique to this section.  Indeed, it is a feature of sections of $\mathcal{P}$ generally.

\begin{theorem}\label{thm:SecToPath}
If $s$ is a global section of $\mathcal{P}$, then there exists a path $(e_0,\dots,e_n)$ from $v_S$ to $v_T$ such that $s(e_i) = \top$ for all $i$.  

\end{theorem}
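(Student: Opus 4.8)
The plan is to study the subgraph $H \subseteq G$ spanned by the \emph{active} edges, those $e$ with $s(e) = \top$, and to show that the section axiom forces the vertex degrees of $H$ to be so constrained that $H$ contains a unique path component, which necessarily joins $v_S$ to $v_T$.

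First I would record what it means locally for $s$ to be a section. At the source, $s(v_S)$ is some edge $e_0 \in E(v_S)$, and the restriction maps out of $v_S$ send $e_0$ to $\top$ and every other edge at $v_S$ to $\bot$; since $s$ is a section this says $\deg_H(v_S) = 1$ (and in particular $H \neq \varnothing$). Likewise $\deg_H(v_T) = 1$. For any other vertex $v$: if $s(v) = \bot$ then the section axiom forces $s(e) = \mathcal{P}(v \rightsquigarrow e)(\bot) = \bot$ for every $e \in E(v)$, so $\deg_H(v) = 0$; if $s(v) = \{e_i, e_j\}$ then exactly $e_i, e_j$ among the edges at $v$ get value $\top$, so $\deg_H(v) = 2$. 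The one point to state carefully is that an edge can be active only if \emph{both} of its endpoints assign it $\top$, because the section axiom has to hold at each endpoint — this is what makes the degree counts above exact rather than upper bounds.

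Now I would invoke the standard structure of a graph all of whose vertices have degree at most $2$: every connected component is an isolated vertex, a simple path, or a cycle, and the vertices of degree $1$ are exactly the endpoints of the path components. Here the only degree-$1$ vertices of $H$ are $v_S$ and $v_T$, which are distinct by hypothesis, so $H$ has exactly one path component and its endpoints are $v_S$ and $v_T$. The edge sequence of this component is the desired path $(e_0, \dots, e_n)$ from $v_S$ to $v_T$, and every edge on it lies in $H$, i.e.\ has $s$-value $\top$, by the definition of $H$.

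I do not anticipate a real obstacle; the content is bookkeeping with the restriction maps. If one prefers not to cite the degree-$\le 2$ structure theorem, the same conclusion comes from a direct walk: begin at $v_S$, traverse its unique active edge, and from each newly reached vertex other than $v_T$ leave along the unique active edge distinct from the one just used; the degree bounds guarantee the walk never revisits a vertex (a revisit would require a third active edge there, or a second one at $v_S$) and, $G$ being finite, the walk must terminate, which it can do only at $v_T$. Either way the care-points are the same: confirming $H$ is nonempty, using $v_S \neq v_T$, and ensuring the no-repetition argument is tight so that what we produce is genuinely a path.
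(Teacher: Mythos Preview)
Your proposal is correct and follows essentially the same approach as the paper: both establish the same degree profile on the active subgraph ($v_S,v_T$ have degree $1$, all other active vertices degree $2$) and then extract the $v_S$--$v_T$ path from that. Your primary route packages the second step as the standard structure theorem for graphs of maximum degree $2$, whereas the paper argues directly via a maximal-length path starting at $v_S$ --- which is exactly the ``direct walk'' alternative you sketch at the end --- so the two proofs differ only in presentation, not in substance.
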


\begin{proof}
Since $s$ must be consistent with the restriction maps of $\mathcal{P}$, for any edge $e$ connected to a vertex $v$, if $s(e) = \top$, then $s(v) \neq \bot$.  So there is a subgraph $G_s$ consisting of all vertices $v$ such that $s(v) \neq \bot$, and all edges $e$ such that $s(e) = \top$.  The restriction maps further imply that $v_S \in G_s$ and $v_T \in G_s$ with $deg_{G_s}(v_S) = 1$ and $deg_{G_s}(v_T) = 1$ and that the degrees of all other vertices in $G_s$ are 2.  Defining $e_0 = s(v_S)$, we have $e_0 \in G_s$.

Since $G_s$ is finite, there exists a path beginning at $v_S$ that is of maximal length: let this path be formed by the edges $(e_0, e_1, \dots,e_{n-1})$ with vertex sequence $(v_S,v_1, \dots, v_n)$.  If $deg_{G_s}(v_n) = 2$, then there is an edge $e_n = \{v_n, v_{n+1}\}$ for some $ v_{n+1} \neq v_{n-1}$.  We cannot have $v_{n+1} = v_S$, as this would imply $deg_{G_s}(v_S) \geq 2$, and we cannot have $v_{n+1} = v_i$ for any $i<n-1$, as this would imply $deg_{G_s}(v_i) \geq 3$.  Therefore $v_{n+1}$ is distinct from $v_S, v_1, \dots, v_n$, which means $(e_0, e_1 \dots,e_n)$ is a path.  This, however, contradicts our choice that $(e_0, e_1 \dots,e_{n-1})$ has maximal length.  Therefore $deg_{G_s}(v_n) = 1$, and since $v_n \neq v_S$, this implies $v_n = v_T$.  Thus, $(e_0, e_1 \dots,e_{n-1})$ is a path from $v_S$ to $v_T$ and $s(e_i) = \top$ for all $i$.
\end{proof}


Paths in $G$ and sections of $\mathcal{P}$ are further related by the following theorem.

\begin{theorem} \label{thm:PathToSec}
Let $(e_0, e_1, \dots,e_{n-1})$ be a path in $G$ with vertex sequence $(v_0, v_1, \dots, v_n)$ such that $v_i$ is neither the source nor the sink for $1 \leq i \leq n-1$. Then there exists a local section $s$ of $\mathcal{P}$ defined on the edges and vertices of the path, with $s(e_i) = \top$ for all $i$.  If $v_0$ is the source and $v_n$ the sink or vice versa, then this $s$ can be extended to a global section.  Similarly, for any cycle that does not pass through the source or sink, there exists a local section $s$ such that $s(e) = \top$ for all edges $e$ in the cycle.
\end{theorem}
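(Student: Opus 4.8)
The plan is to prove all three claims by explicit construction: I write down the asserted section and then check it against every relevant restriction map. For the local section on the path $(e_0,\dots,e_{n-1})$ with vertex sequence $(v_0,\dots,v_n)$, set $s(e_i)=\top$ for every $i$. At an interior vertex $v_i$ with $1\le i\le n-1$, the edges $e_{i-1}$ and $e_i$ are distinct (since $v_{i-1}\neq v_{i+1}$), so $\{e_{i-1},e_i\}\in H(v_i)$, and I set $s(v_i)=\{e_{i-1},e_i\}$. At the endpoint $v_0$: if $v_0$ is the source or the sink then $\mathcal{P}(v_0)=E(v_0)$ and I set $s(v_0)=e_0$; otherwise $\deg(v_0)\ge 2$, so I pick some $e'\in E(v_0)\setminus\{e_0\}$ and set $s(v_0)=\{e_0,e'\}$. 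Treat $v_n$ the same way. The only incidences inside the support of this local section are $v_i\in e_i$ and $v_{i+1}\in e_i$; in each case the vertex value contains or equals $e_i$, so its restriction is $\top=s(e_i)$, which shows $s$ is a valid local section.

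For the global extension, suppose $\{v_0,v_n\}=\{v_S,v_T\}$. Keep $s$ as above — now the endpoint values are forced to be $s(v_0)=e_0\in E(v_0)=\mathcal{P}(v_0)$ and $s(v_n)=e_{n-1}$, with no auxiliary edge needed — and extend by declaring $s(v)=\bot$ for every vertex off the path and $s(e)=\bot$ for every edge off the path. To confirm this is a global section I check the three types of incidence $v\in e$: if $v$ and $e$ both lie on the path, consistency was verified above; if $v$ lies on the path but $e$ does not, then $s(e)=\bot$ and the value $s(v)$ (a single edge at the source or sink, or a two-element subset of $E(v)$ at an interior vertex) restricts to $\bot$ because $e$ is not among the chosen active edges; and if $v$ is off the path, then $s(v)=\bot$, so $\mathcal{P}(v\rightsquigarrow e)(\bot)=\bot$, while $e$ must also be off the path (both endpoints of a path edge lie on the path), so $s(e)=\bot$.

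For a cycle $(e_0,\dots,e_{n-1})$ with vertex sequence $(v_0,\dots,v_n)$, $v_0=v_n$, avoiding the source and sink, I again set $s(e_i)=\top$ for all $i$ and $s(v_i)=\{e_{i-1},e_i\}$ for each $i$ with indices read modulo $n$; since a cycle has length at least $3$ and its vertices are otherwise distinct, $e_{i-1}$ and $e_i$ really are distinct and $\{e_{i-1},e_i\}\in H(v_i)$. Consistency is checked exactly as in the path case, each $s(v_i)$ containing both incident cycle edges. I do not expect any genuinely hard step here; the only thing requiring care is the bookkeeping at the path's endpoints — separating the case where an endpoint is the source or sink (stalk $E(v)$, value a single active edge) from the case where it is an ordinary vertex (stalk $H(v)\cup\{\bot\}$, value a two-element set, which exists because $\deg(v)\ge 2$) — together with the verification, in the global extension, that assigning $\bot$ off the path is consistent with the restriction maps at on-path vertices that point to off-path edges.
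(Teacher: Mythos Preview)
Your proposal is correct and follows essentially the same approach as the paper: the same explicit assignments $s(e_i)=\top$, $s(v_i)=\{e_{i-1},e_i\}$ at interior vertices, the same case split at the endpoints (with the auxiliary edge $e'$ guaranteed by $\deg(v)\ge 2$), and the same extension by $\bot$ off the path. You simply spell out the verification of the restriction maps in more detail than the paper, which only remarks that ``it can be checked.''
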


\begin{proof}
For a path $(e_0, e_1, \dots,e_{n-1})$ as described, define $s$ as follows: 

$$s(e_i) = \top \text{ for all } 0 \leq i \leq n-1$$ 
$$s(v_i) = \{e_{i-1}, e_i\} \text{ for all } 1 \leq i \leq n-1$$

$$
s(v_0) = 
\begin{cases}
    e_0 & \text{if $v_0 = v_S$ or $v_0 = v_T$}\\
    \{e_0, e'\} & \text{otherwise, for any $e' \neq e_0$ connected to $v_0$}
\end{cases}
$$

$$
s(v_n) = 
\begin{cases}
    e_{n-1} & \text{if $v_n = v_S$ or $v_n = v_T$}\\
    \{e_{n-1}, e''\} & \text{otherwise, for any $e'' \neq e_{n-1}$ connected to $v_n$}
\end{cases}
$$

Note that choices of $e'$ and $e''$ exist by the assumption that nodes other than the source or sink have degree at least 2.  It can be checked that $s$ is consistent with the restriction maps of $\mathcal{P}$ on $v_0, v_1, \dots, v_n$ and $e_0, e_1, \dots,e_{n-1}$, so $s$ is a local section.  If $v_0$ is the source and $v_n$ the sink or vice versa, then $s$ can be extended to a global section by setting $s(v) = \bot$ for all vertices $v$ not in the path and $s(e) = \bot$ for all edges not in the path.  Again, it can be checked that the extended $s$ is consistent with the restriction maps.  The case of cycles that do not pass through the source or sink can be handled in a similar way.
\end{proof}

While Theorem \ref{thm:SecToPath} shows that all global sections have an active path from source to sink, it is possible for a global section to have additional cycles that are disjoint from this path. Consider the section $s_2$ over $G$ depicted in Figure \ref{ex:section+extra_loop}.  Notice that $s_2$ has both a path from $v_S$ to $v_T$ -- given by $(e_1)$ -- as well as a cycle between $v_3$, $v_4$, $v_5$, and $v_6$ -- given by $(e_4,e_6,e_7,e_5)$ for example. Sections like this will not be relevant to pathfinding algorithms, so we will be careful to describe the algorithms below so that the sections produced do not contain such cycles.

\begin{figure}\label{ex:section+extra_loop}
\begin{tikzpicture}
          \draw [black, fill = black] (-4,0) circle [radius=0.1];
          \draw [black, fill = black] (4,0) circle [radius=0.1]; 
          \draw [black, fill = black] (0,0) circle [radius=0.1]; 
          \draw [black, fill = black] (0,3) circle [radius=0.1]; 
          \draw [black, fill = black] (4,3) circle [radius=0.1]; 
          \draw [black, fill = black] (-4,3) circle [radius=0.1]; 
          \draw [black] (-4,0) -- (0,0); 
          \draw [black] (4,0) -- (0,0); 
          \draw [black] (-4,3) -- (0,3); 
          \draw [black] (4,3) -- (0,3); 
          \draw [black] (-4,0) -- (-4,3); 
          \draw [black] (0,0) -- (0,3); 
          \draw [black] (4,0) -- (4,3); 
          
          \node at (6,1.5) {Graph $G$};
          
          \node at (-4.25,1.5) {$e_1$};
          \node at (-2,3.25) {$e_2$};
          \node at (-2,-0.25) {$e_3$};
          \node at (0.25,1.5) {$e_4$};
          \node at (2,-0.25) {$e_6$};
          \node at (2,3.25) {$e_5$};
          \node at (4.25,1.5) {$e_7$};
          \node at (-4.3,3.3) {$v_S$};
          \node at (-4.3,-0.3) {$v_T$};
          \node at (0,3.3) {$v_3$};
          \node at (0,-0.3) {$v_4$};
          \node at (4.3,3.3) {$v_5$};
          \node at (4.3,-0.3) {$v_6$};
\end{tikzpicture}

\begin{tikzpicture}
          \draw [black,->] (0,4.5) -- (0,3.5);
          \node at (0.3,4) {$s_2$}; 
          \draw [black,dashed] (-4,0) -- (0,0); 
          \draw [red] (4,0) -- (0,0); 
          \draw [black,dashed] (-4,3) -- (0,3); 
          \draw [red] (4,3) -- (0,3); 
          \draw [red] (-4,0) -- (-4,3); 
          \draw [red] (0,0) -- (0,3); 
          \draw [red] (4,0) -- (4,3); 
          \draw [red, fill = red] (-4,0) circle [radius=0.1];
          \draw [red, fill = red] (4,0) circle [radius=0.1]; 
          \draw [red, fill = red] (0,0) circle [radius=0.1]; 
          \draw [red, fill = red] (0,3) circle [radius=0.1]; 
          \draw [red, fill = red] (4,3) circle [radius=0.1]; 
          \draw [red, fill = red] (-4,3) circle [radius=0.1]; 
          
          \node at (8,1.5) {\color{red} Section $s_2$ values over $G$};
          \node at (-4.25,1.5) {$\top$};
          \node at (-2,3.25) {$\bot$};
          \node at (-2,-0.25) {$\bot$};
          \node at (0.25,1.5) {$\top$};
          \node at (2,-0.25) {$\top$};
          \node at (2,3.25) {$\top$};
          \node at (4.25,1.5) {$\top$};
          \node at (-4.3,3.3) {$e_1$};
          \node at (-4.3,-0.3) {$e_1$};
          \node at (0,3.3) {$\{e_4, e_5\}$};
          \node at (0,-0.3) {$\{e_4, e_6 \}$};
          \node at (4.3,3.3) {$\{e_5, e_7\}$};
          \node at (4.3,-0.3) {$\{e_6, e_7\}$};
\end{tikzpicture}
\caption{A graph $G$ and the values of a particular global section $s$}
\end{figure}
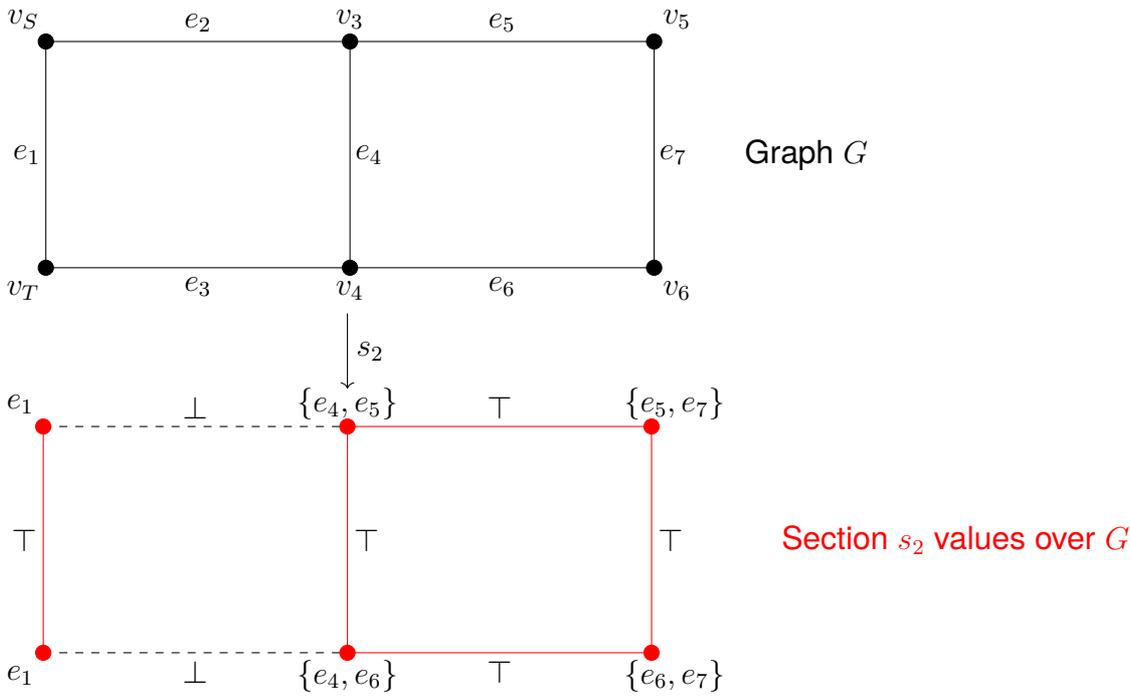



Note also that we have not required that the graph $G$ be connected.  In the case that $G$ does not have a path from source to sink, Theorem \ref{thm:SecToPath} implies that $\mathcal{P}$ does not have any global sections.

We'll now move toward discussing pathfinding algorithms in terms of $\mathcal{P}$.  To this end, suppose $\mathcal{P}$ is built on a weighted graph $G = (V,E)$ with weight function $w:E \rightarrow \mathbb{R}^{+}$. Define a cost function $c$ from the space of local sections of $\mathcal{P}$ to $\mathbb{R}^+$ by

\[
c(s) = \sum_{s(e) = \top} w(e)
\]

\noindent where the sum is taken over all edges $e$ in the domain of $s$ that are activated.  This cost function represents the cost of traveling along all active edges in a section.


\subsection{Dijkstra's Algorithm in Terms of the Path Sheaf}

The relationships prescribed by Theorems \ref{thm:SecToPath} and \ref{thm:PathToSec} between sections of $\mathcal{P}$ and paths in $G$ allow us to describe routing algorithms using $\mathcal{P}$. An algorithm that searches through paths in order to find an optimal path from source to sink can be viewed as searching through the sections of $\mathcal{P}$ in order to find a particular global section. We'll use Dijkstra's algorithm as an example of a routing algorithm that can be viewed in terms of $\mathcal{P}$, using the cost function described above to facilitate the decision making of the algorithm.

Dijkstra's algorithm systematically searches through paths beginning at the source vertex in a weighted graph. It records at each step a tentative distance to each vertex, which is the sum of edge weights along a specific path from the source to that vertex. Each path considered in this process corresponds to a local section $s$ of $\mathcal{P}$, as described in Theorem \ref{thm:PathToSec}, and the distance to the vertex at the end of this path is given by $c(s)$. Thus, we can view this process as recording a tentative section for each vertex, where the section represents the current best known path to that vertex and its cost corresponds to the tentative distance stored by Dijkstra's algorithm. Each new path considered in Dijkstra's algorithm extends a previous path by one edge: this translates to extending a tentative section to a new section defined on an additional edge and vertex. The order in which new paths are considered is determined by the tentative distances, which in our interpretation means they are determined by the costs of the tentative sections. Once paths are found that reach the sink, their corresponding sections can be extended to global sections by Theorem \ref{thm:PathToSec}. The algorithm terminates once it can be confirmed that the shortest path from source to sink has been found: this corresponds to the global section with minimal cost, as is implied by Theorem \ref{thm:SecToPath}. Also by Theorem \ref{thm:SecToPath}, if no path from source to sink exists, then no global section exists. In this case, the algorithm terminates when it can be confirmed that no path from source to sink exists.

We have thus reinterpreted Dijkstra's algorithm as a search over the local sections of $\mathcal{P}$ that gradually extends sections from local to global. The process of searching through these sections is governed by the cost function, and the end result is the global section with minimal cost. We will discuss the possibility of describing other algorithms in terms of $\mathcal{P}$ later, as well as provide an explicit description of Dijkstra's algorithm in terms of a closely related sheaf, defined in the following section.

\section{Distance Path Sheaf}



While $\mathcal{P}$ encapsulates the idea of paths from source to sink in a general way, describing Dijkstra's algorithm in terms of $\mathcal{P}$ required the use of the cost function.  By incorporating this cost function directly into a sheaf, we obtain a new sheaf specially suited to describe Dijkstra's algorithm.  In this new sheaf, which we will call the distance path sheaf, $\mathcal{DP}$, the distance from the source is recorded for active edges and vertices.

We will work with a weighted graph $G=(V,E)$ meeting the same requirements on degrees as for $\mathcal{P}$, with weight function $w:E \rightarrow \mathbb{R}^{+}$.  Then $\mathcal{DP}$ maps vertices 
\[
\mathcal{DP}(v)=
\begin{cases}
E(v)\times \{0\} &\text{if $v=v_S$}\\
E(v)\times \mathbb{R}^+ &\text{if $v=v_T$}\\
(H_o(v)\times \mathbb{R}^+)\cup \{\bot\} &\text{otherwise}
\end{cases}
\]

\noindent and edges

\[
\mathcal{DP}(e)=\mathbb{R}^+ \cup \{\bot\}
\]

\noindent accordingly. Note that we use $H_o(v)$ here instead of $H(v)$ as the order will matter significantly in defining the restriction maps.  For the source, define the restriction map

\[
\mathcal{DP}(v_S\rightsquigarrow e)(e_i,0) = 
\begin{cases}
w(e) & \text{if $e = e_i$}\\
\bot & \text{otherwise}
\end{cases}.
\]

\noindent But, for the sink, define the restriction map

\[
\mathcal{DP}(v_T\rightsquigarrow e)(e_i,x) = 
\begin{cases}
x &\text{if $e=e_i$}\\
\bot &\text{otherwise}
\end{cases}.
\]

\noindent If $v$ is a non-sink and a non-source node with non-$\bot$ assignment, define

\[
\mathcal{DP}(v\rightsquigarrow e)(e_i,e_j,x)=
\begin{cases}
x &\text{if $e=e_i$}\\
x + w(e) &\text{if $e=e_j$}\\
\bot &\text{otherwise}
\end{cases}.
\]

\noindent Then, finally, if $\mathcal{DP}(v)=\bot$, define

\[
\mathcal{DP}(v\rightsquigarrow e)(\bot) = \bot.
\]

The pair of edges in $H_o(v)$ assigned by a section represent the incoming and outgoing edges.  In this way, sections are able to represent directed paths.  In a section,the restriction maps require that an edge activation should match the "upstream" vertex, and the vertex activation should contribute along with the edge weight to the "upstream" edge. This sheaf is particularly suited to Dijkstra's algorithm: a section representing a path starting at the source assigns numerical values to the vertices, representing their distances from the source.


Note that, unlike $\mathcal{P}$, loops are not possible in the global sections of $\mathcal{DP}$. Since weights are strictly positive, the weights of the edges along a loop would be accumulated but won't match the `starting' node for the loop.  This would fail to respect the restriction maps from the `initial' vertex.

\subsection{Dijkstra's Algorithm in Terms of the Distance Path Sheaf}

The following is a rewriting of Dijkstra's algorithm using $\mathcal{DP}$.  This is not meant as a computational alternative, but rather as a concrete example of how such an algorithm can be reinterpreted in the language of sheaves.  A notable feature of Dijkstra's algorithm is that the tentative distance to each node is associated to a tentative path to that node.  This has a natural interpretation based on the sections of $\mathcal{DP}$, since the sections store both a path and the distance to each node in that path. For convenience, to compare the distances stored by sections $s$ and $s'$, we will write $s(v) < s'(v')$ to mean that the last component of $s(v)$ is less than the last component of $s'(v')$ (these are of course the numerical components).

Dijkstra's algorithm can be described in terms of $\mathcal{DP}$ as follows:

\vspace{.5cm}


1. Mark all vertices as unvisited.  For each vertex $v$, the following steps will update a tentative section $s_v$.  For non-source vertices $v$, leave $s_v$ undefined initially, and for $v_S$, let $s_{v_S}$ be defined only on $v_s$, with $s_{v_S}(v_S) = (e, 0)$, where $e$ can be chosen as the first edge considered in the following step. Set the current vertex as $v_c = v_S$, and set the current section as $s_c = s_{v_S}$.

2. Look at the possible ways to extend $s_c$ along one edge: for each unvisited vertex $u$ adjacent to $v_c$, let $e = \{v_c, u\}$. If necessary, change the "outgoing edge" component of $s_c(v_c)$ to $e$\footnote{ This is the first component for the source and the second component for some vertex that is not the source.  Note that the sink will never be the current vertex.} and let $l$ be the numerical component of $s_c(v_c)$.  If $u \neq v_T$, then since $deg_G(u) \ge 2$, $u$ is connected to some other edge $f \neq e$, so define a section $s_{new}$ as an extension of $s_c$ to include $u$ and $e$ by $s_{new}(e) = l + w(e)$ and $s_{new}(u) = (e, f, l + w(e))$.  If $u = v_T$, define $s_{new}$ as an extension of $s_c$ to include $u$ and $e$ by $s_{new}(e) = l + w(e)$ and $s_{new}(u) = (e, l + w(e))$.  Compare the new section and the previously stored section for the vertex $u$: 
if $s_u$ has not yet been defined or if $s_{new}(u) < s_u(u)$, then set $s_u$ equal to $s_{new}$.

3. Once all unvisited vertices adjacent to the current vertex have been checked, mark the current node as visited.

4. If $s_{v}$ is undefined for all unvisited $v$, then the algorithm stops and no path from source to sink exists.  If $s_{v_T}(v_T) \leq s_{v}(v)$ for all unvisited $v$ such that $s_v$ has been defined, then the algorithm stops and $s_{v_T}$ represents the shortest path from source to sink.  In this case $s_{v_T}$ may be extended to a global section by assigning a value of $\bot$ to all vertices and edges not in its current domain, and the length of the path is given by the numerical component of $s_{v_T}(v_T)$.  Otherwise, choose the unvisited vertex $v$ with the least $s_v(v)$, set $s_c$ equal to $s_v$, set $v_c$ equal to $v$, and return to step 2.



\vspace{.5cm}

Rewriting the algorithm in this way involves more variables and more details in each step than a normal description of Dijkstra's, as the sections of the sheaf contain large amounts of information.  However, the way in which data is stored in sections of the sheaf reflects the structures inherent in Dijkstra's algorithm.  The tentative distances in Dijkstra's algorithm are always linked to tentative paths; both are described by tentative sections of $\mathcal{DP}$.  The process of extending tentative paths to entire paths from source to sink is captured by the local-to-global nature of the sheaf.  And finally, distance from the source, the primary concern of Dijkstra's algorithm, is encapsulated in the sections constructed.

\section{Relationships between Sheaves}

\subsection{A Sheaf Morphism}
Having discussed Dijkstra's algorithm in terms of both $\mathcal{P}$ and $\mathcal{DP}$, we now observe the relationship of $\mathcal{P}$ and $\mathcal{DP}$. There exists a sheaf morphism $\varphi: \mathcal{DP} \rightarrow \mathcal{P}$ that forgets the numerical values included in $\mathcal{DP}$.  Viewing $\mathcal{P}$ and $\mathcal{DP}$ as functors into \textbf{Set}, a sheaf morphism is a natural transformation.  
Specifically, we must define functions $\varphi_v: \mathcal{DP}(v) \rightarrow \mathcal{P}(v)$ for all vertices $v$ and $\varphi_e: \mathcal{DP}(e) \rightarrow \mathcal{P}(e)$ for all edges $e$ such that whenever $v \in e$, the following diagram commutes:

\begin{tikzpicture}

          \draw [->] (1,0) -- (3.3,0); 
          \draw [->] (1,3.5) -- (3.3,3.5); 
          \draw [->] (0,3) -- (0,.5); 
          \draw [->] (4,3) -- (4,.5);
          
          \node at (2.15,-.3) {$\varphi_e$};
          \node at (2.15,3.8) {$\varphi_v$};
          \node at (-1.25,1.75) {$\mathcal{DP}(v \rightsquigarrow e)$};
          \node at (5,1.75) {$\mathcal{P}(v \rightsquigarrow e)$};
          
          \node at (0,0) {$\mathcal{DP}(e)$};
          \node at (0,3.5) {$\mathcal{DP}(v)$};
          \node at (4,3.5) {$\mathcal{P}(v)$};
          \node at (4,0) {$\mathcal{P}(e)$};

\end{tikzpicture}

The following definitions simply remove the extra numerical values included in the stalks of $\mathcal{DP}$ to obtain values in the stalks of $\mathcal{P}$. For the source and sink, define $$\varphi_{v_S}(e,0) = e$$  $$\varphi_{v_T}(e,x) = e$$  

For any other node $v$, define  $$\varphi_v(e_i,e_j,x) = \{e_i,e_j\}$$  $$\varphi_v(\bot) = \bot$$

And for any edge $e$ and any $x \in [0, \infty]$, define $$\varphi_e(x) = \top$$  $$\varphi_e(\bot) = \bot$$

It can be checked that these choices make the above diagram commute whenever $v \in e$, and therefore give a sheaf morphism $\varphi$.  The image under $\varphi$ of any global section of $\mathcal{DP}$ is a global section of $\mathcal{P}$.  Thus, applying Theorem \ref{thm:SecToPath}, we see that for any global section $s$ of $\mathcal{DP}$, there exists a path from source to sink such that $s(e) \neq \bot$ for all edges $e$ in the path.

Similarly, the image under $\varphi$ of any local section of $\mathcal{DP}$ is a local section of $\mathcal{P}$ defined on the same domain.  In this way, the search over local sections of $\mathcal{DP}$ that takes place in our rewriting of Dijkstra's algorithm can be mapped, through $\varphi$, to a search over local sections of $\mathcal{P}$, as described in section 2.1.

\subsection{Other Algorithms and a General Claim}

This interpretation of paths in the language of sheaves is of course not limited to Dijkstra's algorithm. Another closely related application is the A* algorithm, which is similar to Dijkstra's algorithm, but incorporates a heuristic that estimates the distance from a given node to the sink. The description of Dijkstra's algorithm in the language of $\mathcal{P}$ can be modified to give the A* algorithm by simply changing the cost function to incorporate the heuristic. Similarly, a modified version of $\mathcal{DP}$ could also be used to reinterpret A*. The restriction maps of $\mathcal{DP}$ would need to be adjusted to take into account the heuristic at each node.

Having considered these particular cases, we will now make a general claim about graph algorithms in terms of $\mathcal{P}$. As shown in Theorem \ref{thm:PathToSec}, a path that does not contain the sink or source except possibly at its endpoints corresponds to a local section of $\mathcal{P}$. Thus, if an algorithm for finding a particular path from source to sink searches through the set of paths described in Theorem \ref{thm:PathToSec}, then it can be reinterpreted as searching through the sections of $\mathcal{P}$.  If the search works by extending known paths, then this can be understood as extending local sections, with the goal of finding a particular global section. And finally, any comparison of current paths in the algorithm corresponds to a comparison of current sections, which may be facilitated by some function on the sections, such as the cost function used for Dijkstra's algorithm.

We'll also suggest that $\mathcal{DP}$ and the morphism $\varphi: \mathcal{DP} \rightarrow \mathcal{P}$ can serve as a model for building sheaves specific to a given algorithm. For an algorithm that can be interpreted in the language of $\mathcal{P}$, it may be possible to build a sheaf $\mathcal{F}$ that represents paths and has some of the decision making of the algorithm built in.  Constructing $\mathcal{F}$ in the right way will then allow for a morphism to $\mathcal{P}$ that forgets the additional structure of $\mathcal{F}$.

\section{Suggestions for Future Work}

\begin{enumerate}
  \item \textit{Reinterpreting Other Concepts in Graph Theory}:
  \begin{itemize}
    \item \textit{Cycles.} While $\mathcal{P}$ concerns paths from source to sink, modifying the definition to use graphs without a distinguished source or sink would result in a sheaf concerned with cycles.  The detection of cycles of graphs is key to many applications from material science to space communication networks and a sheaf theoretic description of them may open the door to new tools from topology and category theory for these problems. 
    
    \item \textit{Arbitrary Graph Paths.}
    In our path sheaf, we require a declared source and sink node. However, it would be a useful tool to have a sheaf that could detect path objects as global sections with no designated source or sink. This type of object could prove important for developing a strong language for graph theoretic ideas in the language of cellular sheaves.
    
  \item \textit{Multi-agent Pathfinding.}
    Another problem of interest in graph theory is that of finding paths with multiple sinks and sources and some form of interference between paths. The specifics of the type of interference and exactly what is being optimized change with the specific problem, but the goal is one that is useful to many real world applications. The sheaves that we have described in this paper may be able to be adapted to find solutions of multi-agent pathfinding problems, which are know to be unable to find optimal solutions in polynomial time. Perhaps these sheaves can give us a better understanding of where this process breaks down. 
    
  \item \textit{Maximizing flow.} If we were to consider a different graph for the max flow problem where edge weights represent capacity instead of distance, we can use $\mathcal{P}$ to describe paths from source to sink and then use the capacities as the number of copies of that edge that are allow to be considered in a collection of paths. As described here this approach lacks formality.
  
  \item \textit{Other Applications.}
    While this paper is largely concerned with networking problems, there is a potential to apply these ideas to other graph theoretic concepts.  Sheaves may be an effective tool for describing other problems or objects in graph theory, such as vertex or edge coloring problems or counting the number of spanning trees.  An effective dictionary that could translate between graph theory problems and sheaf theory problems would be of interest to us.
    
  \end{itemize}

  \item \textit{Encoding Algorithms.} Since $\mathcal{P}$ does a good job representing paths over graphs, it may be possible to reinterpret other algorithms using sheaves, or using $\mathcal{P}$ specifically.  In general, there may be good ways to reframe other search algorithms in terms of sheaves.  In particular, we wonder how many algorithms can be described as extending local sections of specific sheaves to global sections.
   \item \textit{Examine existing algorithms for slightly different settings.}
Some path finding algorithms consider a source but no specific sink and search for minimal distances to every node, including Dijkstra's algorithm.  Some also allow for negative edge weights. There are likely sheaves that reflect algorithms, like Bellman-Ford \cite{ref_bellman-ford} or Floyd-Warshall \cite{ref_floyd-warshall}, with different goals or different initial assumptions about the graph.

    \item \textit{Sheaf Characteristics.} There is a wide range of techniques available for analyzing sheaves.  Two major tools are sheaf cohomology and consistency radius.  Sheaf cohomology makes use of algebraic characteristics of the data in the sheaf to detect properties of the global sections.  The sheaves constructed here could be adapted to more easily compute sheaf cohomology, which may lead to interesting interpretations of the sections.
    
    In addition, it may be interesting to see these algorithms adapted and interpreted using consistency radius.  Consistency radius gives a way of testing how close data contained across a sheaf is to being a section (\cite{robinson2017sheaves}, \cite{robinson2018assignments}).  This tool may be useful in different graph theory applications, but it also may be useful in practical situations.  Being able to compare data collected at a node with the expected values given by a sheaf can help with diagnosing problems with network structure or connectivity.
\end{enumerate}

\bibliography{sheaves}
\bibliographystyle{alpha}
\end{document}